\theoremstyle{plain}
\newtheorem{thm}{\protect\theoremname}
\theoremstyle{remark}
\newtheorem{claim}[thm]{\protect\claimname}
\newenvironment{proof}[1][\protect\proofname]{\par
	\normalfont\topsep6\p@\@plus6\p@\relax
	\trivlist
	\itemindent\parindent
	\item[\hskip\labelsep\scshape #1]\ignorespaces
}{%
	\endtrivlist\@endpefalse
}
\providecommand{\proofname}{Proof}
\theoremstyle{definition}
\newtheorem{defn}[thm]{\protect\definitionname}
\providecommand{\definitionname}{Definition}
\renewcommand{\eqref}[1]{Eq.~(\ref{#1})} 
\newcommand\numberthis{\addtocounter{equation}{1}\tag{\theequation}}
\newcommand{\bra}[1]{\langle #1|}
\newcommand{\ket}[1]{|#1 \rangle}
\newcommand{\braket}[2]{\langle #1 \vert #2 \rangle}
\definecolor{KB}{rgb}{0.4,0.3,0.9}
\definecolor{THc}{rgb}{0.9,0.3,0.2}
\newcommand{\revA}[1]{{#1}}
\providecommand{\claimname}{Claim}
\providecommand{\theoremname}{Theorem}
\newcommand{\idg}[1]{{\bfseries #1)}}
\newcommand{\subfigimg}[3][,]{%
	\setbox1=\hbox{\includegraphics[#1]{#3}}
	\leavevmode\rlap{\usebox1}
	\rlap{\hspace*{-3pt}\raisebox{\dimexpr\ht1-0.5\baselineskip}{{ \textsf{#2}}}}
	\phantom{\usebox1}
}
\begin{document}
\title{Noisy intermediate-scale quantum algorithm for semidefinite programming}
\author{Kishor Bharti}
\email{kishor.bharti1@gmail.com}
\affiliation{Centre for Quantum Technologies, National University of Singapore, 3 Science Drive 2, Singapore 117543}
\author{Tobias Haug}
\affiliation{QOLS, Blackett Laboratory, Imperial College London SW7 2AZ, UK}
\author{Vlatko Vedral}
\affiliation{Centre for Quantum Technologies, National University of Singapore, 3 Science Drive 2, Singapore 117543} 
\affiliation{Clarendon Laboratory, University of Oxford, Parks Road, Oxford OX1 3PU, United Kingdom }

\author{Leong-Chuan Kwek}
\affiliation{Centre for Quantum Technologies, National University of Singapore, 3 Science Drive 2, Singapore 117543}
\affiliation{MajuLab, CNRS-UNS-NUS-NTU International Joint Research Unit, UMI 3654, Singapore}
\affiliation{National Institute of Education,
Nanyang Technological University, 1 Nanyang Walk, Singapore 637616}
\affiliation{School of Electrical and Electronic Engineering
Block S2.1, 50 Nanyang Avenue, 
Singapore 639798 }

\begin{abstract}
Semidefinite programs (SDPs) are convex optimization programs with vast applications in control theory, quantum information, combinatorial optimization and operational research. Noisy intermediate-scale quantum (NISQ) algorithms aim to make an efficient use of the current generation of quantum hardware. However, optimizing variational quantum algorithms is a challenge as it is an NP-hard problem that in general requires an exponential time to solve and can contain many far from optimal local minima. 
Here, we present a current term NISQ algorithm for solving SDPs. The classical optimization program of our NISQ solver is another SDP over a lower dimensional ansatz space. We harness the SDP based formulation of the Hamiltonian ground state problem to design a NISQ eigensolver. Unlike variational quantum eigensolvers, the classical optimization program of our eigensolver is convex, can be solved in polynomial time with the number of ansatz parameters and every local minimum is a global minimum.
\revA{We find numeric evidence that NISQ SDP can improve the estimation of ground state energies in a scalable manner.
Further, we efficiently solve constrained problems to calculate the excited states of Hamiltonians, find the lowest energy of symmetry constrained Hamiltonians and determine the optimal measurements for quantum state discrimination.}
We demonstrate the potential of our approach by finding the largest eigenvalue of up to $2^{1000}$ dimensional matrices and solving graph problems related to quantum contextuality. We also discuss NISQ algorithms for rank-constrained SDPs. Our work extends the application of NISQ computers onto one of the most successful algorithmic frameworks of the past few decades.
\end{abstract}
\maketitle

\section{Introduction}

The panorama of quantum computing has been transformed enormously
in the last forty years. Once acknowledged as a theoretical
pursuit, quantum computers with a few dozen qubits are now a reality. Advancement at the hardware frontier has led to the demonstration of ``computational quantum supremacy'' for contrived tasks~\cite{arute2019quantum,zhong2020quantum}. We sit at the edge of the noisy intermediate-scale quantum (NISQ) era~\cite{preskill2018quantum,bharti2021noisy}.
In recent years, significant effort has been put towards designing algorithms for practically relevant tasks which can be implemented on NISQ devices~\cite{bharti2021noisy,cerezo2020variational}.
Canonical examples of these NISQ algorithms are variational quantum algorithms (VQAs) such as the variational quantum eigensolver (VQE)~\cite{peruzzo2014variational,mcclean2016theory,kandala2017hardware} and the quantum approximate optimization algorithm (QAOA)~\cite{farhi2014quantum,farhi2016quantum}. NISQ algorithms have been developed for various tasks such as finding the ground state of Hamiltonians~\cite{peruzzo2014variational,mcclean2016theory,kandala2017hardware,mcclean2017hybrid,kyriienko2020quantum,parrish2019quantum,bespalova2020hamiltonian,huggins2020non,takeshita2020increasing,stair2020multireference,motta2020determining,seki2020quantum,bharti2020quantum,bharti2020iterative,cervera2021meta}, combinatorial optimization~\cite{farhi2014quantum,farhi2016quantum}, quantum simulation~\cite{li2017efficient,yuan2019theory,benedetti2020hardware,bharti2020simulator,barison2021efficient,commeau2020variational,heya2019subspace,cirstoiu2020variational,gibbs2021long,lau2021quantum,haug2020generalized,otten2019noise,lim2021fast,lau2021nisq} quantum metrology~\cite{meyer2020variational,meyer2021fisher} and machine learning~\cite{schuld2019quantum,havlivcek2019supervised, kusumoto2019experimental,farhi2018classification, mitarai2018quantum}. These algorithms have been investigated in detail with a comprehensive exposition on possible hurdles~\cite{mcclean2018barren,sharma2020trainability,cerezo2020cost,wang2020noise,bittel2021training,huang2019near} and corresponding countermeasures~\cite{huang2019near,haug2021capacity,haug2021optimal,larocca2021diagnosing}.  A thorough study of possible applications
of NISQ devices is expected to unravel the potential as well as limitations of such devices. Moreover, in the quest for practical quantum advantage in the NISQ era it
is pertinent to investigate novel NISQ algorithms for practically relevant tasks.

A major challenge in the NISQ era is the optimization program for VQAs, where a classical optimizer is searching for the parameters of a quantum state that minimizes a cost function. For the VQA to be successful, one requires an ansatz that is expressible enough to approximate the optimal solution of the corresponding optimization problem.
However, even if such ansatz has been found, the optimization program of the VQA is NP-hard and in contrast to classical neural networks the optimization landscape contains numerous far from optimal local minima~\cite{bittel2021training,anschuetz2021critical,you2021exponentially}. The highly non-convex nature of landscape renders optimization difficult even for VQAs involving logarithmically many qubits or classically easy problems such as free fermions. 
\revA{Further, VQAs struggle to optimize problems where the solution space is constrained due to symmetries such as for chemistry problems~\cite{higgott2019variational,mcclean2016theory,rubin2018application,ryabinkin2018constrained,greene2021generalized,kuroiwa2021penalty}.}

In the last few decades, semidefinite programs (SDPs) have led to ground breaking developments in mathematical optimization~\cite{vandenberghe1996semidefinite,wolkowicz2012handbook}.
The study of SDPs has uncovered numerous applications in theoretical computer science,
control theory and operations research. Many problems in quantum
information such as state discrimination~\cite{skrzypczyk2019all,bae2015quantum,jevzek2002finding}, dimension witness~\cite{ray2021graph} and self-testing~\cite{bharti2019local,bharti2019robust,bharti2021graph,yang2014robust,bancal2015physical} can be investigated using SDPs. 
While SDPs can be solved efficiently in polynomial time on classical computers, high dimensional problems may still be out of scope for classical computers. For example, finding the ground state of a Hamiltonian can be framed as a SDP, however it is intractable for classical computers due to the exponential scaling of the dimension of the problem.
To explore possible quantum advantages for solving SDPs with quantum computers, quantum SDP solvers have been
studied comprehensively~\cite{brandao2017quantum,van2017quantum,van2018improvements,kerenidis2020quantum,brandao2017quantum2,chakrabarti2020quantum,Apeldoorn2018}. However, existing quantum SDP solvers cannot be executed on NISQ devices as they require extensive quantum resources.

Here, we propose the NISQ SDP Solver (NSS) as a hybrid quantum-classical algorithm to solve SDPs. The NSS encodes a SDP onto a quantum computer combined with an optimization routine on a classical computer.
The classical optimization part of the NSS is also a SDP with its dimension given by the size of the ansatz space.  The quantum computational part of the NSS has no classical-quantum feedback loop and requires the quantum computer only for estimating overlaps, which can be done efficiently on current NISQ devices.
We showcase the NSS for a wide range of problems. We design a NISQ SDP based quantum eigensolver (NSE) to find the ground state of quantum Hamiltonians. In contrast to VQE, the classical optimization part of the NSE can be solved in polynomial time without the local minima problem.
\revA{We find numerical evidence that the NSE improves the estimation of the ground state energy by a constant factor for any number of qubits for a non-integrable one-dimensional Ising model combined with a quantum annealing ansatz.
Furthermore, the NSS is capable of efficiently implementing constraints in the optimization program in order to calculate excited states and solve symmetry constrained problems. In addition, we provide an NSS for determining optimal measurements for quantum state discrimination.}
The NSS can be also find the largest eigenvalue of matrices, which we demonstrate for matrices as large as $2^{1000}$.
Finally, we show the NSS for various important problems related to quantum information such as Bell non-local games and the Lovász Theta number. We also provide the extension of the NSS for rank-constrained SDPs.

\revA{One can argue that SDPs are solvable efficiently in polynomial time on classical computers and hence why should one construct a NISQ algorithm for SDPs. We would like to stress that the polynomial runtime is in terms of the input matrix size and the number of constraints. For problems with exponential input size, polynomial of exponential would be still exponential and hence classical SDPs would be unable to process such cases. To begin, the Hamiltonian ground state problem is an SDP, but that does not mean it is tractable. In other words, 
while SDPs can be solved in polynomial time and memory, when the problem scales exponentially classical computers are unable to process it. For example,  no classical computer in the world is able to store a problem of size $2^{60}$. However, quantum computers can store this vector efficiently within a quantum state of 60 qubits, which is the current state of the art of NISQ computers.
Here, our NSS algorithm offers the potential to outperform any classical SDP solver.}



\section{Background}
We now highlight the key difference between VQA and NSS in Fig.\ref{fig:sketch} by using the ground state problem as an example.
Finding the ground state of a Hamiltonian can be framed as a SDP using density matrices (see program \ref{eq:Ham_SDP_primal_1}), however the SDP suffers from exponential scaling of the dimension of the quantum state and thus it is difficult to solve on classical computers.
To address the scaling of the quantum state, VQE and NSE map the quantum state onto a quantum computer. 
VQE uses a quantum circuit parameterized by the parameter $\boldsymbol{\theta}$. Then, the VQE minimizes the energy of the quantum state by variationally adjusting the parameter $\boldsymbol{\theta}$ via a classical optimization routine in a feedback loop. However, this minimization task is challenging as the corresponding optimization program is non-convex and NP-hard~\cite{bittel2021training}. 
\revA{In contrast to classical neural networks, the optimization landscape of VQEs is characterised by local minimas far from the global minima, where optimization routines are unlikely to find reasonable approximations of the global minima~\cite{anschuetz2021critical}. Only for overparameterized circuits the landscape becomes favorable, which for most types of circuits requires an exponential amount of parameters and exponentially deep circuits~\cite{larocca2021theory,haug2021capacity}. }

In contrast, the NSS uses a hybrid density matrix~\eqref{eq:Ansatz_1}. It consists of a linear combination of a $M$-dimensional set of ansatz quantum states with classical combination coefficients $\boldsymbol{\beta}$. In the NSS, the $\boldsymbol{\beta}$ parameters are optimised to minimize the energy. 
The optimization of the coefficients $\boldsymbol{\beta}$ is another SDP with dimension $M$ only, which can be efficiently optimised in polynomial time on a classical computer. The ansatz preserves the convexity of the landscape and hence any local minimum is also a global minimum.

\begin{figure}[htbp]
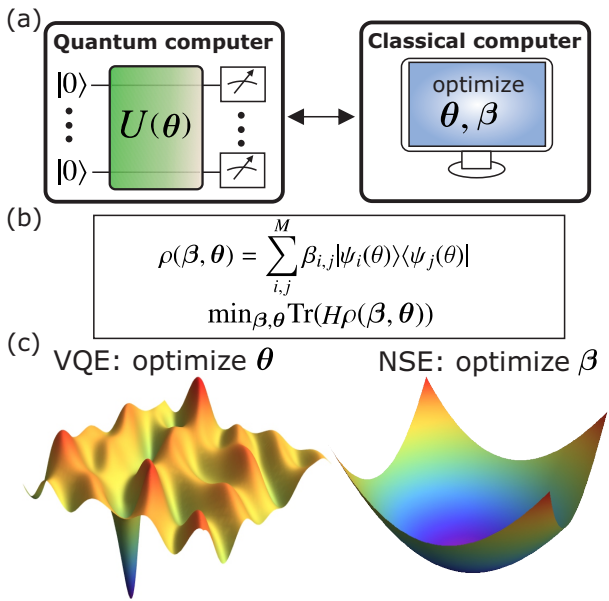

	\centering	
	\subfigimg[width=0.45\textwidth]{}{SketchSDP.pdf}
	\caption{\idg{a} Hybrid quantum-classical computing approach to NISQ. Quantum computer prepares and measures $M$ quantum states $\vert \psi_i(\boldsymbol{\theta})\rangle$. Classical computer is used to optimize hybrid density matrix $\rho(\boldsymbol{\beta},\boldsymbol{\theta})=\sum_{i,j}^M\beta_{i,j}\vert \psi_{i}(\boldsymbol{\theta})\rangle\langle\psi_{j}(\boldsymbol{\theta})\vert$, which depends on the parameters for the parameterized quantum circuit $\boldsymbol{\theta}$ and the combination coefficients $\boldsymbol{\beta}$. 
	\idg{b} Optimization task to find ground state of Hamiltonian $H$ by minimizing parameters $\boldsymbol{\beta},\boldsymbol{\theta}$ in respect to energy $\text{Tr}(H\rho(\boldsymbol{\beta},\boldsymbol{\theta}))$.
	\idg{c} Landscape of optimization task for $\boldsymbol{\beta}$ and $\boldsymbol{\theta}$. Optimization of circuit parameters $\boldsymbol{\theta}$ in the variational quantum eigensolver (VQE) is NP-hard with a non-convex landscape and persistent far from optimal local minima. Optimisation of coefficients $\boldsymbol{\beta}$ in our NISQ SDP based eigensolver (NSE) can be solved in polynomial time by a semidefinite program (SDP), where the optimisation landscape is convex. Due to the convexity of the optimisation landscape, any local minimum is also a global minimum.
	}
	\label{fig:sketch}
\end{figure}

\section{Semidefinite Program}
SDPs can be thought of as generalization of the ``standard form" linear
programming (LP). The ``standard form" of LP is given by
\begin{gather*}
 \text{min}\text{ }c^{T}x \numberthis \label{eq:LP1}\\
\text{s.t }a_{i}^{T}x=b_{i}\quad\forall i\in[m] \\
x \in \mathbb{R}^n_{+}\,,
\end{gather*}
where the set $[m]$ is given by $[m]\equiv\left\{ 1,2,3,\ldots,m\right\}$.  Here, $\mathbb{R}^n_{+} := \{x\in \mathbb{R}^n \vert x \geq 0\}$. The set $\mathbb{R}^n_{+}$ is known as nonnegative orthant. The phrase ``standard form" hints that there are other possible non-standard representations  of LPs. Any LP in non-standard form, however, can be converted into standard form by following a few tricks. These tricks include change of variables, transforming the inequalities into equalities and switching maximum to minimum.   

In a SDP, the non-negativity
constraint $x\ge 0$ is replaced by positive semidefinite cone constraint
$X\succcurlyeq0.$ SDPs involve optimization of a linear function
of matrix $X$ over the affine slice of the cone of  positive semidefinite
matrices. The standard form of a SDP is given by
\begin{gather*}
 \text{min}\text{ }Tr\left(CX\right) \numberthis \label{eq:SDP1}\\
\text{s.t. }Tr\left(A_{i}X\right)=b_{i}\quad\forall i\in[m]\\
X \in \mathcal{S}_+^{n} \,.
\end{gather*}
Here, $\mathcal{S}_+^{n}$ denotes the set of $n\times n$ symmetric positive semidefinite matrices. Mathematically speaking, $\mathcal{S}^n_{+} := \{X\in \mathcal{S}^n \vert X \succcurlyeq 0\}$. The matrices $C$ and $A_i$ belong to the set of symmetric matrices $\mathcal{S}^n$ for $i\in [m]$.  The $i$-th element of vector $b\in \mathbb{R}^m$ is denoted by $b_i$.

Duality is one of the oldest and most fruitful ideas in mathematics. The duality principle of mathematical optimization theory suggests
that mathematical optimization problems can be viewed from either
of two perspectives, namely the primal problem or the dual problem.
For a given primal minimization problem $P$, the solution to the
corresponding dual problem problem $D$ provides a lower bound to
the solution of $P$. The standard form of the dual of the SDP in
program~\ref{eq:SDP1} is given by
\begin{gather*}
 \text{max }b^{T}y \numberthis \label{eq:SDP_dual_1}\\
\text{s.t. }\sum_{i=1}^{m}y_{i}A_{i}\preccurlyeq C\\
y \in \mathbb{R}^m\,.
\end{gather*}
The SDPs in programs \ref{eq:SDP1} and \ref{eq:SDP_dual_1} constitute
a primal-dual pair. SDPs can be extended to complex-valued matrices via a cone of Hermitian positive semidefinite matrices i.e. $X \in \mathcal{H}_+^{n}$. Here, $\mathcal{H}_+^{n}$ denotes the set of $n\times n$ Hermitian positive semidefinite matrices. Since SDPs for complex-valued matrices are more general than SDPs for real valued matrices, we will consider the former case in this work.

\section{The NSS} \label{sec:NSS}
We now outline the NSS, which consists of three distinct steps, namely ansatz selection, overlap measurement and post-processing.
First, we select a set of quantum states $\mathbb{S}=\left\{ \vert\psi_{j}\rangle\in\mathcal{H}\right\} _{j}$ over a Hilbert space $\mathcal{H}$, where the set contains $M$ quantum states $\mathbb{\left|S\right|} = M$. Now, our NISQ semidefinite programming solver (NSS) uses the following
hybrid density matrix ansatz
\begin{equation}
X_\beta=\sum_{\left(\vert\psi_{i}\rangle,\vert\psi_{j}\rangle\right)\in\mathbb{S}\times\mathbb{S}}\beta_{i,j}\vert\psi_{i}\rangle\langle\psi_{j}\vert,\label{eq:Ansatz_1}
\end{equation}
where $\beta_{i,j}\in\mathbb{C}$. Note that the quantum states in $\mathbb{S}$ are prepared by a quantum system while the coefficients
$\beta_{i,j}$ are stored on some classical device as matrix $\beta$.
For $\beta \in \mathcal{H}_+^{M}$, we have $X_{\beta} \in \mathcal{H}_+^{n}$ (see Appendix \ref{sec:proof_density}).  We now assume that $C$ and constraint matrices $A_i$ of program~\ref{eq:SDP1} can be written as a sum of unitaries 
\begin{gather*}
C=\sum_{k}s_{k}U_{k}\\
A_{i}=\sum_{l}f_{i,l}U_{l}^{(i)}\,.
\end{gather*}
As second step of the NSS, we measure the following overlaps on the quantum system
\begin{gather}
\mathcal{D}_{a,b}=\sum_{k}s_{k}\langle\psi_{b}\vert U_{k}\vert\psi_{a}\rangle\label{eq:D_matrix}\\
\mathcal{E}_{a,b}^{(i)}=\sum_{l}f_{i,l}\langle\psi_{b}\vert U_{l}^{(i)}\vert\psi_{a}\rangle \,.\label{eq:E_matrix}
\end{gather}
The overlaps can be measured using the Hadamard test or with direct measurement
methods~\cite{mitarai2019methodology}. 
An alternative NISQ-friendly method that requires only sampling in the computational basis has been proposed in~\cite{bharti2020iterative}, which we use in the following. This method assumes that the unitaries $U_k$ and $U_l^{(i)}$ are Pauli strings $P=\bigotimes_{j=1}^N\boldsymbol{\sigma}_j$ with $\boldsymbol{\sigma}\in \{I,\sigma^x,\sigma^y,\sigma^z\}$. As the Pauli strings form a complete basis, any matrix can be decomposed into a linear combination of Pauli strings.
Further, we assume that the ansatz space is generated by an initial state $\ket{\psi}$ and a set of $M$ different Pauli strings $\{P_1,\dots,P_M\}$ via $\mathbb{S}=\{P_j\ket{\psi}\}_{j=1}^M$.
Now, each overlap element in \eqref{eq:D_matrix}, \eqref{eq:E_matrix} can be written as a sum of expectation values of Pauli strings $\bra{\psi}P_1P_2P_3\ket{\psi}=a\bra{\psi}P'\ket{\psi}$, where we use that a product of Pauli strings can be written as a single Pauli string $P'$ with a prefactor $a \in \{+1,-1,+i,-i\}$. 
Then, one can efficiently calculate the overlap elements by measuring the expectation values of Pauli strings. On NISQ computers, this can be done by preparing the initial state $\vert \psi \rangle $, performing single-qubit rotations into the eigenbasis of the Pauli operator and then sampling in the computational basis $N_\text{Q}$ times. 
\revA{We note that sampling from quantum circuits is general intractable for classical computers~\cite{aaronson2016complexity}. The additive error $\Delta P$ of estimating the expectation value of Pauli strings scales as $\Delta P\propto N_\text{Q}^{-\frac{1}{2}}$ according to Hoeffding's inequality and is independent of the number of qubits~\cite{huang2019near}. }

The third and final step of the NSS consists of post-processing on a classical computer. Here, we write the standard form primal SDP in terms of the measured overlaps
\begin{gather*}
\min\text{ }Tr\left(\mathcal{D}\beta\right)\numberthis\label{eq:Ansatz_SDP_primal}\\
\text{s.t. }Tr\left(\mathcal{E}^{(i)}\beta\right)= b_{i}\quad\forall i\in\left[m\right]\\
\beta \in \mathcal{H}_+^{M}\,.
\end{gather*}
This is a SDP over $\beta$ with the corresponding hybrid density matrix is given by \eqref{eq:Ansatz_1}. The dual of the SDP in program \ref{eq:Ansatz_SDP_primal}
is given by
\begin{gather*}
\text{max }b^{T}y \numberthis\label{eq:Ansatz_SDP_Dual}\\
\text{s.t. }\sum_{i=1}^{m}y_{i}\mathcal{E}^{(i)}\preccurlyeq\mathcal{D}\,.
\end{gather*}
The SDPs in program \ref{eq:Ansatz_SDP_primal} and \ref{eq:Ansatz_SDP_Dual}
constitute a primal-dual pair over the ansatz space that can be solved in polynomial time on a classical computer. In the case where the ansatz states $\vert \psi_i \rangle $ are linear independent and cover the whole Hilbert space, the programs~\ref{eq:Ansatz_SDP_primal} and~\ref{eq:Ansatz_SDP_Dual} over the ansatz space recover the SDPs over the entire space corresponding to programs~\ref{eq:SDP1} and~\ref{eq:SDP_dual_1}.

\section{Extensions to Rank-constrained SDPs}
Rank constrained SDPs find numerous applications in combinatorics~\cite{marianna2013complexity}, control theory~\cite{boyd1994linear} and quantum information~\cite{ray2021graph}. Many problems in optimization theory can be modelled as rank-constrained SDPs~\cite{vandenberghe1996semidefinite,anjos2011handbook}. The rank-constraint turns the optimization program into a non-convex and NP-hard problem, which is in general intractable to solve. The optimization program for rank constrained SDPs is given by
\begin{gather*}
 \text{min}\text{ }Tr\left(CX\right) \numberthis \label{eq:SDP_rank_k}\\
\text{s.t. }Tr\left(A_{i}X\right)=b_{i}\quad\forall i\in[m]\\
\text{rank}(X) \leq k \\
X \in \mathcal{S}_+^{n} \,.
\end{gather*}
Notice that the only difference between the regular SDP in program~\ref{eq:SDP1} and the rank-constrained SDP in program~\ref{eq:SDP_rank_k} is the presence of the rank constraint $rank(X) \leq k$ in the latter. Intuitively speaking, the aforementioned rank constraint means that the optimizer of the program~\ref{eq:SDP_rank_k} must have a rank of at most $k$. The famous Max-Cut problem can be modelled as a rank constrained SDP for $k=1$ (see Appendix~\ref{sec: rank_constrained}). 
We defer the extension of the NSS for the rank constrained SDPs in Appendix~\ref{sec: rank_constrained}. 

\section{Examples}
We now demonstrate the NSS for various problems of interest.

\subsection{Ground State Problem}

First, we demonstrate the NSE solver to find the ground state of Hamiltonians.
For Hamiltonian $H=\sum_k s_kU_k$ and density matrix $\rho$, the
problem of finding the ground state can be written
as
\begin{gather*}
\min\text{ }Tr\left(\rho H\right)\label{eq:Ham_SDP_primal_1}\numberthis\\
\text{s.t. }Tr\left(\rho\right)=1\\
\rho\succcurlyeq0\,.
\end{gather*}
The dual formulation for Program \ref{eq:Ham_SDP_primal_1}
is given by
\begin{gather*}
\max\text{ }\lambda\numberthis\label{eq:Ham_SDP_dual_1}\\
\text{s.t. }\left(H-\lambda I\right)\succcurlyeq0\,.
\end{gather*}
The optimum value of $\lambda$ corresponds to the ground state energy.
In the ansatz space generated by $\mathbb{S}$, the
primal SDP for the Hamiltonian ground state problem is given by
\begin{gather*}
\min\text{ }Tr\left(\beta\mathcal{D}\right)\numberthis\label{eq:Ham_SDP_Ansatz_Primal_1}\\
\text{s.t. }Tr\left(\beta\mathcal{E}\right)=1\\
\beta \in \mathcal{H}_+^{M}\,,
\end{gather*}
with $\mathcal{E}_{a,b}=\braket{\psi_b}{\psi_a}$ and $\mathcal{D}_{a,b}=\sum_k s_k\bra{\psi_b}U_k\ket{\psi_a}$.
The dual program of Program~\ref{eq:Ham_SDP_Ansatz_Primal_1} is given by
\begin{gather*}
\max\text{ }\lambda\numberthis\label{eq:Ham_SDP_Ansatz_dual_1}\\
\text{s.t. }\left(\mathcal{D}-\lambda\mathcal{E}\right)\succcurlyeq0\,.
\end{gather*}
Notice that the primal optimization program over $\beta$ is convex
and hence exhibits a unique minimum value. This is unlike the case of VQE,
where optimization is non-convex and there can be multiple local minima~\cite{bittel2021training}. In the NSE, the classical optimization finds the optimal solution in time polynomial in the number of ansatz parameters, given that the solution is contained in the ansatz space. On the other hand, in VQAs such as VQE and QAOA, even if the optimal solution is contained in the ansatz space, the classical optimization can be NP-hard~\cite{bittel2021training}.
\revA{ Moreover, the primal optimal solution is equal to the dual optimal solution.
\begin{claim}
The primal optimal solution corresponding to the SDP in Program \ref{eq:Ham_SDP_Ansatz_Primal_1} is equal to its dual optimal solution (Program \ref{eq:Ham_SDP_Ansatz_dual_1}). In other words, the SDP in \eqref{eq:Ham_SDP_Ansatz_Primal_1} admits strong duality. 
\end{claim}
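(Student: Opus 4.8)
The plan is to establish \emph{strong duality} for the primal–dual pair~\ref{eq:Ham_SDP_Ansatz_Primal_1}–\ref{eq:Ham_SDP_Ansatz_dual_1} by verifying Slater's constraint qualification on the primal side. As a warm-up I would first record \emph{weak duality}, which holds for every SDP pair: for any primal feasible $\beta\succcurlyeq 0$ and any dual feasible $\lambda$ (so $\mathcal{D}-\lambda\mathcal{E}\succcurlyeq 0$) one has $Tr(\mathcal{D}\beta)-\lambda\, Tr(\mathcal{E}\beta)=Tr\big((\mathcal{D}-\lambda\mathcal{E})\beta\big)\ge 0$, and since the primal constraint forces $Tr(\mathcal{E}\beta)=1$ this is exactly $Tr(\mathcal{D}\beta)\ge\lambda$. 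Hence the primal optimal value $p^\star$ dominates the dual optimal value $d^\star$, and the content of the claim is the reverse inequality $p^\star\le d^\star$.

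For this I would invoke the conic form of Slater's theorem: if the primal program is bounded below and admits a \emph{strictly} feasible point, i.e.\ a feasible $\beta$ in the interior of the cone ($\beta\succ 0$ with $Tr(\mathcal{E}\beta)=1$), then $p^\star=d^\star$ and the dual optimum is attained. Boundedness below is immediate once we write the ansatz states as the columns of an $n\times M$ matrix $\Psi$, so that $\mathcal{E}=\Psi^\dagger\Psi$ and $\mathcal{D}=\Psi^\dagger H\Psi$: the primal is then the minimization of $Tr(HX)$ over $X=\Psi\beta\Psi^\dagger\succcurlyeq 0$ subject to $Tr(X)=1$, which is at least $\lambda_{\min}(H)>-\infty$. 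So $p^\star$ is finite.

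The crux is to exhibit the strictly feasible $\beta$. Since $\mathcal{E}$ is the Gram matrix $\mathcal{E}_{a,b}=\braket{\psi_b}{\psi_a}$ of the ansatz set, it is positive semidefinite, and (assuming the set is not identically zero) $Tr(\mathcal{E})=\sum_a\braket{\psi_a}{\psi_a}>0$, in particular $\mathcal{E}\neq 0$. Taking $\beta_0=\mathbb{1}_M\succ 0$ gives $Tr(\mathcal{E}\beta_0)=Tr(\mathcal{E})>0$, so $\beta=\beta_0/Tr(\mathcal{E})$ is positive definite and satisfies $Tr(\mathcal{E}\beta)=1$. This is the required Slater point, and strong duality follows; as a by-product, when the $\ket{\psi_i}$ are linearly independent so that $\mathcal{E}\succ 0$, the primal feasible set is compact and the primal optimum is attained as well.

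I expect the only real subtlety to be the degenerate case where the ansatz states are linearly dependent and $\mathcal{E}$ is singular. The temptation is to try Slater on the \emph{dual}, but this fails: if $\ker\mathcal{E}\neq\{0\}$ then $\mathcal{D}-\lambda\mathcal{E}$ restricted to $\ker\mathcal{E}$ is independent of $\lambda$, so it can never be strictly positive definite and the dual is generically not strictly feasible. Routing the argument through primal strict feasibility, as above, sidesteps this completely, because the construction of $\beta$ uses only $\mathcal{E}\succcurlyeq 0$ and $\mathcal{E}\neq 0$ and never invertibility. A more hands-on alternative would be to diagonalize the matrix pencil $(\mathcal{D},\mathcal{E})$ on the range of $\mathcal{E}$, identify $d^\star$ with its smallest generalized eigenvalue and $\beta^\star$ with the (normalized) rank-one projector onto a corresponding generalized eigenvector, and verify $Tr(\mathcal{D}\beta^\star)=d^\star=p^\star$ directly; this also works but requires the same bookkeeping around $\ker\mathcal{E}$, so the Slater route is cleaner.
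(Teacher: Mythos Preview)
Your proposal is correct and follows the same route as the paper's proof: verify Slater's condition on the primal by exhibiting a full-rank (positive-definite) feasible $\beta$ proportional to the identity. Your version is in fact more careful than the paper's, which simply takes $\mathcal{L}=\mathbb{I}_M$ without the normalization by $Tr(\mathcal{E})$ and omits the boundedness check and the discussion of the degenerate case.
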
 
\begin{proof}
To show that the SDP in Program \ref{eq:Ham_SDP_Ansatz_Primal_1} admits strong duality, we need to show that the interior is non-empty~\cite{boyd2004convex}. The non-empty interior can be established via the existence of a full rank feasible solution for Program~\ref{eq:Ham_SDP_Ansatz_Primal_1}. Consider 
$$\mathcal{L} = \mathbb{I}_M.$$
Clearly, $\mathcal{L}$ is feasible and full rank. This completes the proof.
\end{proof}
}

\revA{Now we demonstrate the NSE for finding the ground state of a non-integrable model, namely the one-dimensional Ising model of $N$ qubits with transverse field $h$ and longitudinal field $g$
\begin{equation}\label{eq:Ising}
    H_\text{ising}=H_z+H_x=-\sum_{n=1}^N[ \sigma^z_{n}\sigma^z_{n+1}+g\sigma^z_n]-\sum_{n=1}^N h\sigma^x_n\,
\end{equation}
with $H_z=-\sum_{n=1}^N[ \sigma^z_{n}\sigma^z_{n+1}+g\sigma^z_n]$ and $H_x=-\sum_{n=1}^N h\sigma^x_n$.
First, we prepare an initial state on the quantum computer, which we then use to construct the ansatz space. As initial state, we either choose a hardware efficient circuit $\ket{\psi_\text{rand}}$ consisting $p$ layers of randomized $y$-rotations and CNOT gates arranged in a chain topology (see Appendix~\ref{sec:hardware_efficient}) 
, a product state $\ket{+}^{\otimes N}$ and a discretized quantum annealing state
\begin{equation}
\ket{\psi_\text{QA}}=\prod_{k=1}^p e^{-iT \sum_{n=1}^NH_x}e^{-i \frac{Tk}{p} H_z}\ket{+}^{\otimes N}\,,
\end{equation}
where $p$ is the number of layers of the circuit and $T$ the quantum annealing time. The state is constructed by evolving with \eqref{eq:Ising}, where we stepwise evolve with the non-commuting parts $H_x$ and $H_z$. This state is a discretized form of a quantum annealing protocol, where one starts with the ground state of $H_x$, and then slowly increases $H_z$ until one reaches the target Hamiltonian~\eqref{eq:Ising}. In the limit $p\rightarrow\infty$, the adiabatic theorem guarantees that this state becomes the exact ground state~\cite{kadowaki1998quantum}. A similar type of ansatz with additional variational parameters is used for QAOA and VQE~\cite{farhi2014quantum,wiersema2020exploring}.}

To generate the ansatz space for~\eqref{eq:Ansatz_1}, we use a NISQ-friendly adaption of the Krylov subspace approach~\cite{lanczos1950iteration,saad1992analysis,seki2020quantum,motta2020determining}. In the original Krylov subspace approach, the ground state is approximated by a sum of powers of the Hamiltonian $H^k$  applied on the initial state $\ket{\psi}$ prepared on the quantum computer with appropriate coefficients $\alpha_k$ and truncated up to order $K$
\begin{equation}
\vert\xi\left(\alpha\right)\rangle^{\left(K\right)}=\alpha_{0}\vert\psi\rangle+\alpha_{1}H\vert\psi\rangle+\cdots+\alpha_{K}H^{K}\vert\psi\rangle\,.\label{eq:Krylov_Ansatz}
\end{equation}
However, on NISQ computers it is challenging to measure $H^k$. To simplify this approach, we use the fact that the Ising Hamiltonian is a sum of Pauli strings $P_{i}$. We decompose the power of the Hamiltonian $H^k=(\sum_k P_i)^k=\sum_{i_1,\dots,i_k} c_{i_1\dots i_k} P_{i_1}\dots P_{i_k}$ into a sum of products of Pauli strings. We then take each product of Pauli string $P_{i_1}\dots P_{i_k}\ket{\psi}$ and add each unique term to the set $\mathbb{S}$~\cite{,bharti2020iterative}. Note that a product of Pauli strings is again a Pauli string. We do this for each power of the Hamiltonian up to order $K$. The generated set $\mathbb{S}$ contains the original Krylov subspace.  However, it can be easily measured on NISQ devices as the corresponding overlaps are simple measurements of Pauli strings. We show as example the first order of the ansatz states for \eqref{eq:Ising} with $M=3N$ states 
\begin{align*}
\mathbb{S}^1_\text{Ising}=&\{\sigma_1^z\ket{\psi},\dots,\sigma_N^z\ket{\psi},\sigma_1^z\sigma_2^z\ket{\psi},\dots,\sigma_N^z\sigma_1^z\ket{\psi},\\
&\sigma_1^x\ket{\psi},\dots,\sigma_N^x\ket{\psi}\}\,.
\end{align*}
We use a subset of $M$ states from $\mathbb{S}$ to run the NSE and investigate the convergence of our approach, where we select the first $M$ states in ascending order of the order of the Krylov subspace $k$. 

\begin{figure}[htbp]
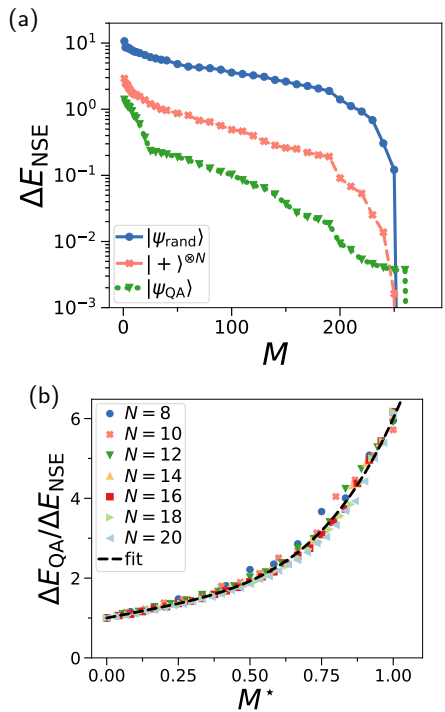

	\centering	
	\subfigimg[width=0.33\textwidth]{(a)}{EdiffEvalQAESDPL8i1m14J1h1d10e0s0G1000R1M286E3c-10_0.pdf}\\
	\subfigimg[width=0.3\textwidth]{(b)}{ImprovErrorMEvalQAESDPL8i16m14J1h1d4e0s0G1000R1M0E1c-10_0.pdf}
	\caption{\idg{a} Error $\Delta E_\text{NSE}=E_\text{NSE}-E_\text{g}$ of the energy $E_\text{NSE}$ found via NSE and the exact ground state energy $E_\text{g}$ for the Ising model (\eqref{eq:Ising}, ${h=g=1}$) plotted against number of ansatz states $M$.
	The $M$ ansatz states are generated using a Krylov subspace approach from an initial state. The initial state of $N=8$ qubits is either a hardware efficient circuit $\ket{\psi_\text{rand}}$ composed of $p=4$ layers of random single qubit $y$ rotations and CNOT gates arranged in a chain topology, a product state $\ket{+}^{\otimes N}$ or a quantum annealing state $\ket{\psi_\text{QA}}$ with $p=4$ layers. 
	\idg{b} Scaling with number of qubits $N$ of the relative improvement $\Delta E_\text{QA}/\Delta E_\text{NSE}$ of estimation error of NSE $\Delta E_\text{NSE}$ and quantum annealing $\Delta E_\text{QA}=E_\text{QA}-E_\text{g}$. The relative improvement collapses to a single curve for varying number of ansatz states $M^\star=M/(3N)$ divided by number of qubits. The initial state is the quantum annealing state $\ket{\psi_\text{QA}}$ with $p=N/2$ layers.  We fit with $\Delta E_\text{QA}/\Delta E_\text{NSE}=3.6{M^*}^4+1.2 M^\star+1$. 
	}
	\label{fig:groundstate}
\end{figure}

\revA{In Fig.\ref{fig:groundstate}a, we investigate the error  $\Delta E_\text{NSE}= E_\text{NSE}-E_\text{g}$ between the energy via the NSE $E_\text{NSE}$ and the exact ground state energy $E_\text{g}$ as function of $M$.
We find that with increasing $M$ the energy decreases and beyond a threshold $M$ the exact ground state is reached. The quality of the approximation highly depends on the choice of initial state. We find that the random circuit state $\ket{\psi_\text{rand}}$ converges slowly and only finds the exact ground state when $M$ becomes the same order as the Hilbert space, In contrast, the quantum annealing state converges much faster to a small error and yields a good approximation of the energy even at modest $M$.

In Fig.\ref{fig:groundstate}b, we investigate the scaling of the NSE with number of qubits $N$. We prepare the quantum annealing ansatz $\ket{\psi_\text{QA}}$ with $p=N/2$ layers with energy $E_\text{QA}=\bra{\psi_\text{QA}} H_\text{ising}\ket{\psi_\text{QA}}$. Now, we apply the NSE with initial state $\ket{\psi_\text{QA}}$ to improve the energy estimation by using the first order of the NISQ-friendly Krylov subspace~\eqref{eq:Krylov_Ansatz}.  We show the improvement of NSE in estimating the energy $\Delta E_\text{QA}/\Delta E_\text{NSE}$ relative to the error of quantum annealing $\Delta E_\text{QA}=E_\text{QA}-E_\text{g}$. 
With increasing number of ansatz states $M$, $\Delta E_\text{QA}/\Delta E_\text{NSE}$ improves non-linearly.
For any number of qubits, $\Delta E_\text{QA}/\Delta E_\text{NSE}$ collapses to a single curve as function of number of ansatz states ${M^*}=M/(3N)$ normalised by the number of qubits $N$. This suggests that we can achieve a scalable improvement in energy estimation  $\Delta E_\text{QA}/\Delta E_\text{NSE}(M\propto N)=\text{const}$ when the number of ansatz states $M$ scales linearly with $N$.
For example, we achieve a factor of 6 improvement for $M=3N$ ansatz states for any $N$. This demonstrates that the NSE can substantially enhance the accuracy of finding the ground state energy even for larger system sizes. In the Appendix~\ref{sec:scaling}, we show that the same scaling appears for different parameters of the Hamiltonian and the quantum annealing state. }

\revA{\subsection{Excited states }
We now adapt the NSS solver to find the excited states of the Hamiltonian. We assume that we already found approximations for ground state $\rho_0$ and the first  first $N_{\text{E}}\ge0$ excited states $\rho_n$ with the coefficients $\beta_i$, $i\in\{0,\dots,N_{\text{E}}\}$. To find the $N_{\text{E}}+1$ excited state $\rho_{N_{\text{E}}+1}$ with coefficient $\beta_{N_{\text{E}}+1}$, we run the NSS with the added constraint that the overlap between $\beta_{N_{\text{E}}+1}$ and the already found states is zero with $\sum_{n=0}^{N_{\text{E}}}\text{tr}(\rho_n\rho_{N_{\text{E}}+1})=\sum_{n=0}^{N_{\text{E}}}\text{tr}(\beta_n\mathcal{E}\beta_{N_{\text{E}}+1}\mathcal{E})=0$. The state with the smallest energy that satisfies the constraints is the approximation of the $N_\text{E}+1$ excited state. The NSS solver for excited states is given by
\begin{gather*}
\min_{\beta_{N_{\text{E}}+1}}Tr\left(\beta_{N_{\text{E}}+1}\mathcal{D}\right)\numberthis\label{eq:NSS_excited}\\
\text{s.t. }Tr\left(\beta_{N_{\text{E}}+1}\mathcal{E}\right)=1\\
\sum_{n=0}^{N_{\text{E}}}\text{tr}(\beta_n\mathcal{E}\beta_{N_{\text{E}+1}}\mathcal{E})=0\\
\beta \in \mathcal{H}_+^{M}\,.
\end{gather*}
We can be run Program~\ref{eq:NSS_excited} iteratively with the output of the previous iterations to find up to $M-1$ excited states.}

\revA{\subsection{Symmetry-resolved lowest eigenenergy }
Quantum many-body Hamiltonians $H$ often have symmetries. Each symmetry corresponds to a particular symmetry operator $S$ which commutes with the Hamiltonian $[S,H]=0$. 
Under time evolution with $H$, the expectation value of $S$ is conserved. A common task encountered in quantum many-body physics and quantum chemistry is to find the eigenstate with lowest energy that is simultaneously an eigenstate of the symmetry $S$ with a conserved quantity $s_k$. 
One common approach to solve this problem is use an ansatz that respects the symmetry operator $S$ initialised with the desired $s_k$~\cite{gard2020efficient,kokail2019self}. However, finding such an ansatz can be difficult and the optimisation is more challenging compared to a general ansatz that breaks the symmetry~\cite{choquette2021quantum,kokail2019self}. 
Another approach is to add penalty terms to the Hamiltonian, but this is known to render the optimization far more challenging~\cite{higgott2019variational,kuroiwa2021penalty}.

Our SDP solver opens up a new way to find the lowest eigenenergy with the conserved quantity. We constrain the minimization problem of the Hamiltonian $H$ to the subspace where $S$ takes the eigenvalue $s_k$ 
\begin{gather*}
\min_\rho Tr\left(\rho H\right)\label{eq:Ham_SDP_primal_conserve}\numberthis\\
\text{s.t. }Tr\left(\rho\right)=1\\
Tr\left(S \rho\right)=s_k\\
Tr\left(S^2 \rho\right)=s_k^2\\
\rho\succcurlyeq0\,.
\end{gather*}
Here, we demand that the expectation value of the symmetry operator and its square are fixed to the conserved quantity, i.e. $\langle S\rangle=s_k$ and $\langle S^2\rangle=s_k^2$. Here, $S^2$ is needed to make sure that we find an eigenstate of the symmetry operator.
We now give the corresponding formulation as NSS
\begin{gather*}
\min_\beta Tr\left(\mathcal{D} H\right)\label{eq:Ham_SDP_primal_conserve_NISQ}\numberthis\\
\text{s.t. }Tr\left(\mathcal{E}\beta\right)=1\\
Tr\left(\mathcal{R} \beta\right)=s_k\\
Tr\left(\mathcal{T} \beta\right)=s_k^2\\
\beta\succcurlyeq 0\,.
\end{gather*}
Here, we define $\mathcal{R}_{a,b}=\bra{\psi_b}S\ket{\psi_a}$ and $\mathcal{T}_{a,b}= \bra{\psi_b}S^2\ket{\psi_a}$, where $S$ and $S^2$ can be decomposed into a sum of unitaries. In contrast to VQE,  SDPs with constraints are convex and can be efficiently solved.

We implement our solver for two important quantum many-body problems, the transverse Ising model \eqref{eq:Ising} and the Heisenberg model
\begin{equation}\label{eq:Heisenberg}
H=\sum_{n=1}^N(\sigma^x_n\sigma^x_{n+1}+\sigma^y_n\sigma^y_{n+1}+h\sigma^z_n\sigma^z_{n+1})\,.
\end{equation}
As demonstration, for the transverse Ising model ($g=0$) we consider its parity symmetry $P=\prod_{n=1}^N\sigma^z_n$ with conserved quantities $p=\pm1$. For the Heisenberg model, we consider the conservation of number of particles $Q=\sum_{n=1}^N\sigma^z_n$ with $q_k=\{-N,-N+2,\dots,N-2,N\}$. We run the NSS with a random hardware efficient circuit and the NISQ-friendly adaption of the Krylov subspace approach to construct the ansatz states.
In Fig.\ref{fig:conserved} we solve for the lowest eigenenergy within a particular symmetry sector for varying number of ansatz states $M$. For low $M$, the constrained NSE does not find any solution as the ansatz is unable to satisify the constraints for the symmetry. Above a specific $M$, we find an appropriate solution, which for further increase of $M$ converges to the lowest eigenenergy $E_0^S$ for the given conserved quantity. For the Heisenberg  model, we find that higher particles $Q$ require a larger number of ansatz states $M$ to find a feasible solution.
We believe that by replacing the random hardware efficient circuit with a better suited initial state, the convergence of the energy can be tremendously improved.}

\begin{figure}[htbp]
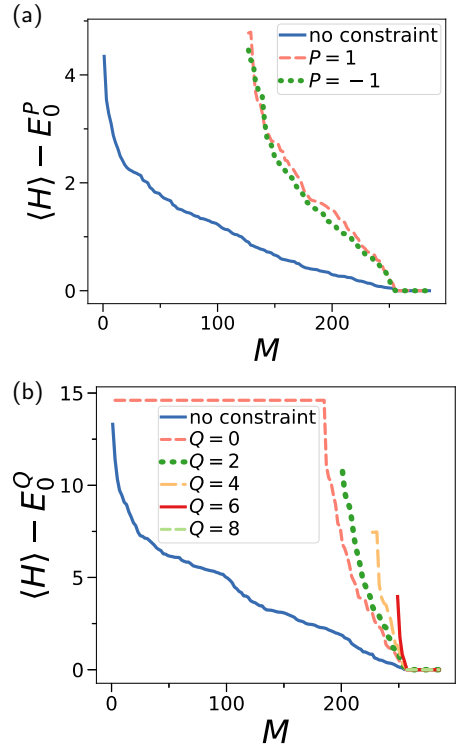

	\centering	
	\subfigimg[width=0.33\textwidth]{(a)}{HconstrainEvalQAESDPL8i1m0J1h1d8e0s0G1000R40c-10_0.pdf}\\
	\subfigimg[width=0.33\textwidth]{(b)}{PlotsQAESDP.pdf}
    \caption{Lowest eigenenergy and eigenstate of a Hamiltonian within a symmetry sector. We solve for the lowest eigenenergy using \eqref{eq:Ham_SDP_primal_conserve_NISQ} where we fix the conserved quantities to a particular value. As reference, we also show the unconstrained NSS solver. We show the energy of the Hamiltonian $\langle H\rangle$ in reference to the lowest eigenergy of the symmetry sector $E_0^S$ as function of the number of ansatz states $M$. The ansatz is randomized quantum circuit.
	\idg{a} One-dimensional transverse Ising Hamiltonian \eqref{eq:Ising} ($h=1$, $g=0$) with parity $P$ and $N=8$ qubits.
	\idg{b} One-dimensional Heisenberg model \eqref{eq:Heisenberg} with conserved number of particles $Q$ for $N=8$ qubits.
	}
	\label{fig:conserved}
\end{figure}

\subsection{Largest Eigenvalue}
Our next task is to find the largest eigenvalue of a sparse matrix $C$ by maximizing Program~\ref{eq:Ham_SDP_Ansatz_Primal_1}.
We assume that $C$ is of size $\mathcal{N}=2^N$ and is represented by a combination of $S$ Pauli string $C=\sum_{i=1}^S c_i P_i^\text{r}$, where the Pauli strings are given by $P^\text{r}_i=\otimes_{j=1}^N \boldsymbol{\sigma}_j$ with $\boldsymbol{\sigma}_j\in\{I,\sigma^x,\sigma^y,\sigma^z\}$ and $c_i$ is a prefactor.  
To numerically demonstrate the performance of NSS, we uniformly sample the Pauli operators for each qubit, choose random $c_i\in[-1,1]$ and use the $N$-bit state with all zeros $\ket{0}^{\otimes N}$ as ansatz state.
Using the Krylov subspace idea for finding the ground state, we similarly construct the ansatz space $\mathbb{S}$.
While the expectation values can be calculated classically for the product state, it becomes an intractable problem when using highly entangled quantum states as ansatz states. 
In Fig.\ref{fig:eigval}, we plot the difference between the largest eigenvalue found by NSE and the exact solution $\Delta \lambda$ as function of the number of states $M$ within the ansatz space for different matrix dimensions $\mathcal{N}$. \revA{For small $M$, we find that the error decreases approximately with $\langle \Delta \lambda \rangle\propto M^{-0.77}$.} Beyond a threshold $M$, we find very good convergence even for matrix dimensions $2^{1000}$.

\begin{figure}[htbp]
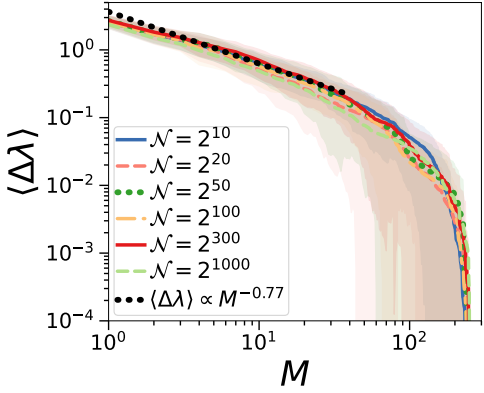

	\centering	
	\subfigimg[width=0.37\textwidth]{}{HdiffEvalIQAEL10i6m14E8J1h1d4e0s0G1000R0n20P500c-10_0.pdf}
	\caption{NSS for calculating the largest eigenvalue of a matrix $C$ consisting of $S=8$ random Pauli strings. We show average difference $\langle \Delta \lambda \rangle$ between largest eigenvalue found by NSE and exact largest eigenvalue as function of the size of the ansatz space $M$ for different matrix sizes $\mathcal{N}$. Shaded area is standard deviation of $\Delta \lambda$ averaged over 20 random instances of $C$. The black dots are a fit for small $M$.
	}
	\label{fig:eigval}
\end{figure}

\revA{\subsection{Unambiguous state discrimination}
The task of state discrimination is to identify $N_\text{S}$ states drawn randomly from a set $\mathcal{G}=\{\rho_n\}_{n=1}^{N_\text{S}}$ by performing a measurement on the state.
General measurements are described by positive operator-valued measures (POVMs) $F_n\succcurlyeq 0$ with the condition $\sum_{n}F_n=I$. 
If all the states are pure and pairwise orthogonal, one can simply measure with the projectors on the individual states $\Pi_n=\ket{\psi_n}\bra{\psi_n}$. However, for non-orthogonal states this naive approach will result in classification errors.  
However, we can find a set of $N_\text{S}+1$ POVMs that can unambiguously classify pure states. The POVMs are of the form 
$\sum_{n=1}^{N_\text{S}+1}F_n=I$ with $F_{N_\text{S}+1}=I-\sum_{n=1}^{N_\text{S}}F_n$. 
When we measure the outcome $n\in\{1,\dots,N_\text{S}\}$ associated to POVM $F_n$ with probability $Q_n(\rho)=\text{tr}(\rho F_n)$, we classify the measured state as $\rho_n$. If we measure the outcome $N_\text{S}+1$ with POVM $F_{N_\text{S}+1}$, then we say that we are unable to classify the state.
The problem of finding the optimal set of POVMs can be formulated as an SDP. Our goal is to optimize POVMs in respect to the average probability $Q_\text{correct}=\frac{1}{N_\text{S}}\sum_{n=1}^{N_\text{S}}Q_n(\rho_n)$ to correctly classify the states. Further, we demand that the probability of wrongly classifying state $\rho_k$ is bounded by $Q^\text{error}_k=\sum_{n\ne k}^{N_\text{S}}\text{tr}(\rho_k F_n)\le\epsilon$, $\forall k$ with $\epsilon\ge0$.
The SDP is given by
\begin{gather*}
\max_{F_1,\dots,F_{N_\text{S}}}\frac{1}{N_\text{S}}\sum_{n=1}^{N_\text{S}}Tr\left(F_n\rho_n\right)\numberthis\label{eq:SDP_POVM}\\
\text{s.t. }\sum_{n\ne k}^{N_\text{S}}Tr\left(\rho_k F_n\right)\le \epsilon\quad\forall k\in\{1,\dots,N_\text{S}\}\\
\sum_{n=1}^{N_\text{S}+1} F_n=I\\
F_k \succcurlyeq 0\quad\forall k\in\{1,\dots,N_\text{S}+1\}\,.
\end{gather*}
We can formulate the problem as a NSS. We assume we are given a set of $N_\text{S}$ states to be discriminated written in the form of hybrid states $\rho_n=\sum_{i,j}\beta_{i,j}^n\ket{\psi_i}\bra{\psi_j}$ with the $M\times M$ matrix $\beta^n\succcurlyeq 0$ and $M$ ansatz states $\mathbb{S}=\{\ket{\psi_j}\}_{j=1}^M$. To get valid density matrices, we have $\text{tr}(\rho_n)=\text{tr}(\mathcal{E}\beta^n)=1$, $\mathcal{E}_{a,b}=\braket{\psi_b}{\psi_a}$ and $\beta^n\succcurlyeq 0$. 
We write the $N_\text{S}$ POVMs to be optimized as hybrid POVMs 
\begin{equation}
F_n=\sum_{i,j}\gamma_{i,j}^n\ket{\psi_i}\bra{\psi_j}
\end{equation} 
with $\gamma^n\succcurlyeq 0$. The last POVM for the case where we are unable to classify the state is given by $F_{N_\text{S}+1}=I-\sum_{n=1}^{N_\text{S}}F_n$. We demand that $F_{N_\text{S}+1}\succcurlyeq 0$ is positive semidefinite, which is always fulfilled when
$B=\mathcal{E}-\sum_{n=1}^{N_\text{S}}\mathcal{E}\gamma^n \mathcal{E}\succcurlyeq 0$. 
$F_{N_\text{S}+1}$ is positive semidefinite when for any state $\ket{x}$ we have $\bra{x}F_{N_\text{S}+1}\ket{x}\ge0$. We can write arbitrary states as $\ket{x}=\sum_{n=1}^M\alpha_n\ket{\psi_n}+\alpha_\perp\ket{\psi_\perp}$ with state $\ket{\psi_\perp}$ being orthogonal to the subspace spanned by the ansatz states and normalised coefficients $\boldsymbol{\alpha}$, $\alpha_\perp$. A straightforward calculation shows
$\bra{x}F_{N_\text{S}+1}\ket{x}=\boldsymbol{\alpha}^\dagger B\boldsymbol{\alpha}+\vert\alpha_{\perp}\vert^2\ge\boldsymbol{\alpha}^\dagger B\boldsymbol{\alpha}$, which is non-negative when $B\succcurlyeq 0$. The program of the NSS for state discrimination is now given by
\begin{gather*}
\max_{\boldsymbol{\gamma}^1,\dots,\boldsymbol{\gamma}^{N_\text{S}}}\frac{1}{N_\text{S}}\sum_{n=1}^{N_\text{S}}Tr\left(\boldsymbol{\gamma}^n\mathcal{E}\boldsymbol{\beta}^n\mathcal{E}\right)\numberthis\label{eq:SDP_NISQ_POVM}\\
\text{s.t. }\sum_{n\ne k}^{N_\text{S}}Tr\left(\mathcal{E}\boldsymbol{\beta}^k\mathcal{E}\boldsymbol{\gamma}^n\right)\le \epsilon\quad\forall k\in\{1,\dots,N_\text{S}\}\\
B=\mathcal{E}-\sum_{n=1}^{N_\text{S}}\mathcal{E}\gamma^n \mathcal{E}\succcurlyeq 0\\
\boldsymbol{\gamma}^k \succcurlyeq 0\quad\forall k\in\{1,\dots,N_\text{S}\}\,.
\end{gather*}
As both the states and POVMs are constructed from the same ansatz space, it is unsurprising that Program \ref{eq:SDP_NISQ_POVM} finds the optimal POVMs for unambiguous state discrimination for any number of qubits.

We demonstrate in Fig.\ref{fig:POVM} our algorithm by classifying two pure states generated by a hardware efficient quantum circuit $\ket{\psi_\text{rand}}$ (see Appendix~\ref{sec:hardware_efficient}), which is intractable to simulate for large number of qubits. The two states $\rho_1$ and $\rho_2$ are prepared as $\ket{\psi_k}=\sum_{i,j}^M\beta^k_{ij}P_j\ket{\psi_\text{rand}}\bra{\psi_\text{rand}}P_i$ where $P_i\in\mathbb{S}$ are a random set of $M$ Pauli strings. The matrix $\mathcal{E}$ can be efficiently measured on NISQ computers as shown in Sec.~\ref{sec:NSS}. In Fig.\ref{fig:POVM}a, we show the probability $Q_\text{correct}$ of correctly identifying the states as function of the angle $\phi=\arccos(\sqrt{\text{tr}(\rho_1\rho_2)})$ between the two states. We find that for demanding zero misclassification error $\epsilon=0$, our NSS finds the analytically known optimal POVMs with $Q_\text{correct}^\text{optimal}=1-\cos(\phi)$~\cite{peres1988differentiate}. In Fig.\ref{fig:POVM}b, we show the average probability $Q_\text{unknown}=\frac{1}{N_\text{S}}\sum_{n=1}^{N_\text{S}}\text{tr}(\rho_n F_{N_\text{S}+1})$ that we cannot make a decision. The change in slope of the classification probability for $\epsilon>0$ seen in Fig.\ref{fig:POVM}a coincides with $Q_\text{unknown}=0$.
}

\begin{figure}[htbp]
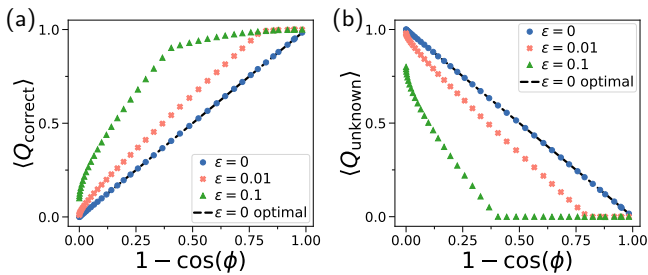

	\centering	
	\subfigimg[width=0.24\textwidth]{(a)}{probcorrectEvalSDPPOVML10i1m30J1h1d10e0s0G1000R1n2s0f-1p41c-10_0.pdf}\hfill
	\subfigimg[width=0.24\textwidth]{(b)}{probunkownEvalSDPPOVML10i1m30J1h1d10e0s0G1000R1n2s0f-1p41c-10_0.pdf}
    \caption{We show the probability of classifying two states with optimized POVMs found with the NSS. We demand that the misclassification error is upper bounded by $\epsilon$. \idg{a} Probability $Q_\text{correct}$ of correctly classifying states  as function of angle between the two states $\phi$. \idg{b} Probability $Q_\text{unknown}$ of being unable to make a decision. The dashed line is the analytic optimal solution for $\epsilon=0$.
    The classified states are hybrid states generated using random quantum circuits with $N=10$ qubits, $d=10$ layers and $M=20$ ansatz states generated with random Pauli strings.
	}
	\label{fig:POVM}
\end{figure}

\subsection{Lovász Theta Number}
Graph invariants are properties that depend only on the  abstract structure of a graph.
The Lovász Theta number is such a graph invariant that was first introduced by László Lovász in the breakthrough $1979$ paper titled ``On the
Shannon capacity of a Graph''~\cite{lovasz1979shannon}. The Lovász Theta number provides
an upper bound to the Shannon capacity of a graph, another graph invariant
quantity. Surprisingly, it is connected with quantum contextuality~\cite{budroni2021quantum,cabello2014graph,bharti2020machine,bharti2019robust,bharti2019local,bharti2021graph} and can help us to understand the potential of quantum computers~\cite{howard2014contextuality}. Given
a graph $G=(V,E)$ with vertex set $V$ and adjacency matrix $E$,
the SDP for the Lovász Theta number is given by
\begin{gather*}
 \max \text{ }Tr\left(JX\right) \numberthis \label{eq:Lovasz_SDP_1}\\
\text{s.t. }X_{i,j}=0\quad\forall E_{i,j}=1\\
Tr(X)=1\\
X\succcurlyeq0\,.
\end{gather*}
Here, $J$ is an all one matrix. Since $X$ is real valued, the ansatz space can be taken as real valued,
which can be achieved within NSS by demanding that the ansatz quantum states are
real valued. 
\begin{figure}
    \centering
    \includegraphics[width=0.28\textwidth]{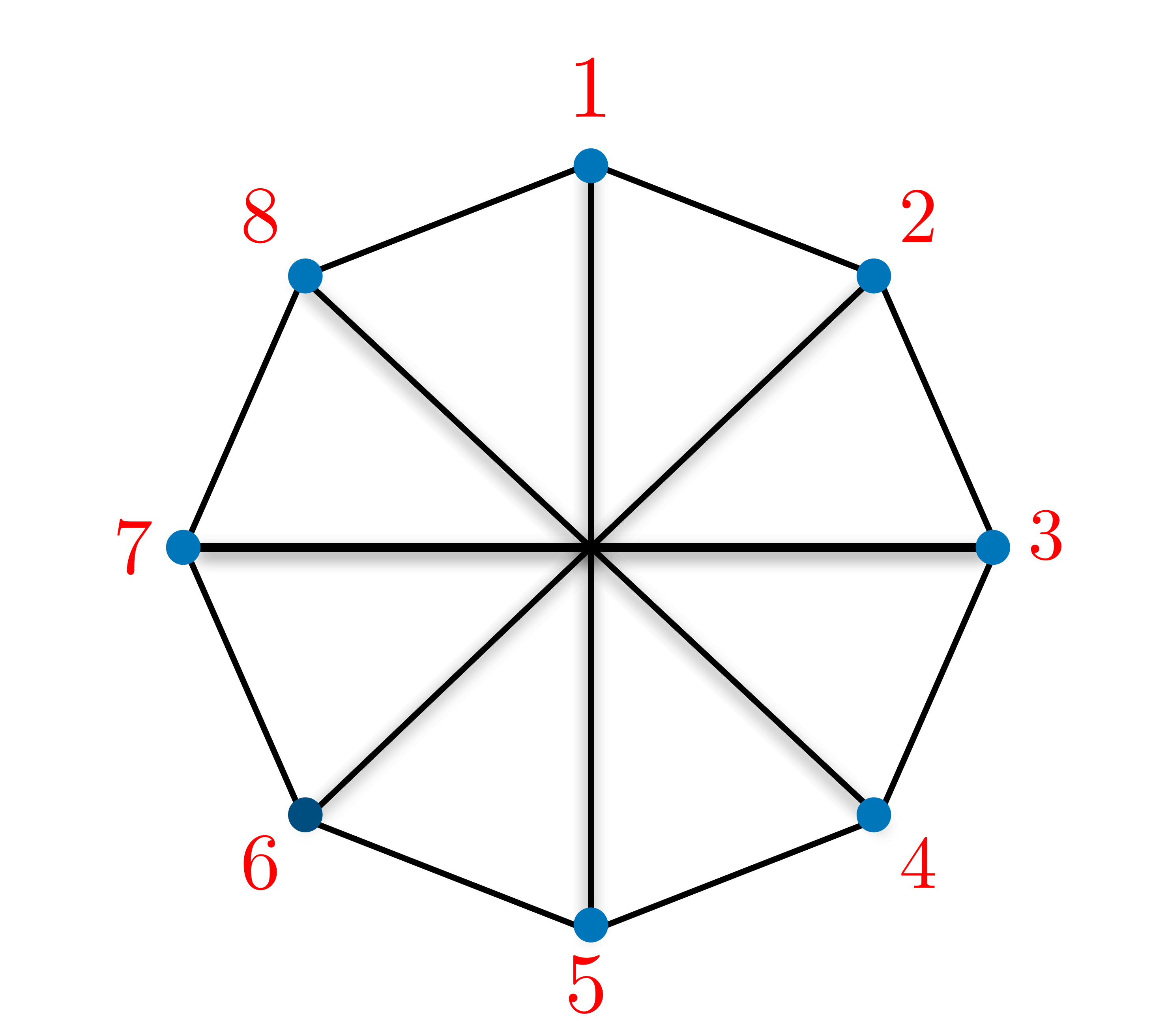}
    \caption{Graph with 8 nodes that has applications in quantum foundations and in device certification protocols. We calculate the Lovász Theta number ($2+\sqrt{2}$) for this graph using NSS.
    }
    \label{fig:circulant}
\end{figure}
To demonstrate the NSS, we calculate the Lovász Theta number for the graph shown in Fig.~\ref{fig:circulant}.
To generate the ansatz states $\mathbb{S}$, we apply a set of Pauli operators on a quantum state. We use the zero state $\ket{0}=\ket{0}^N$ or representative examples of randomized states generated via hardware efficient quantum circuits $\ket{\psi_\text{rand}}$ (see Appendix~\ref{sec:hardware_efficient}). Then, the set of basis states is generated by applying $M$ different combinations of Pauli strings on the state $\mathbb{S}=\left\{P^x_i\ket{\psi}\right\}_{i=1}^M$, where $P^x_i=\otimes_{j=1}^N \boldsymbol{\sigma}_j$ with $\boldsymbol{\sigma}_j\in\{I,\sigma^x\}$.
In Fig.\ref{fig:thetaXOR}a we show the error of the NSS. We calculate the error as the difference between the exact solution $C_\text{exact}$, including the constraints of the problem, and the expectation values $\langle C\rangle$ gained from the quantum state via NSS.
We observe an improvement with increasing number of ansatz states $M$, reaching the optimal solution latest when the number of basis states reaches the dimension of the problem. Depending on the choice of initial state, the optimal solution can be reached with a lower number of ansatz states. 

\begin{figure}[htbp]
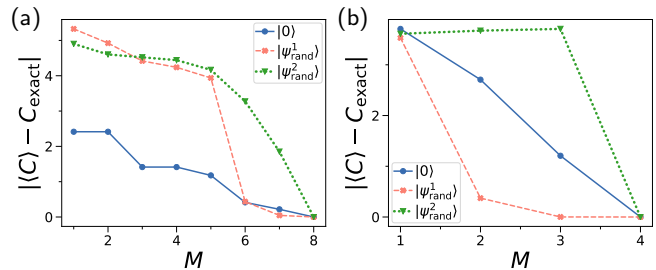

	\centering	
	\subfigimg[width=0.24\textwidth]{(a)}{DdiffEvalQAESDPL3i6m20J1h1d4e0s0G1000R9c-10_0.pdf}\hfill
	\subfigimg[width=0.24\textwidth]{(b)}{DdiffEvalQAESDPL2i6m21J1h1d4e0s0G1000R10c-10_0.pdf}
	\caption{Error of exact solution $C_\text{exact}$ and NSS $\langle C \rangle$ plotted against number of ansatz states $M$. $C_\text{exact}$ and $\langle C \rangle$ are vectors that contain 
    the cost function as well as the constraints to be fulfilled. 
    The ansatz space is generated using the all zero state $\ket{0}$ as well as representative examples of randomized quantum circuit $\ket{\psi_\text{rand}}$ constructed in a hardware efficient manner using $p=4$ layers of single qubit $y$ rotations and CNOT gates arranged in a chain topology. With these states, we generate the $M$ basis states $\mathbb{S}=\left\{P^x_i\ket{\psi}\right\}_{i=1}^M$, where $P^x_i$ is one of the $N$-qubit Pauli strings consisting of identity $I$ and $\sigma^x$ operators.
	\idg{a} NSS algorithm for Lovász Theta number for the graph given in Fig.\ref{fig:circulant} with $N=3$ qubits.
	\idg{b} Bell non-local game with $N=2$ qubits.
	}
	\label{fig:thetaXOR}
\end{figure}

\subsection{Bell Non-Locality}

Finally we apply the NSS to calculate the quantumly achievable
success probability for the canonical Bell non-local game: the Clauser
Horn Shimony Holt (CHSH) game~\cite{Bell64on,clauser69proposed,brunner2014bell,scarani2019bell}. The CHSH game involves two spacelike
separated parties, say Alice and Bob. A referee asks the players uniformly
random pairs of questions $x,y\in\left\{ 0,1\right\} $. The players
have to answer $a,b\in\left\{ 0,1\right\}$ such that
\begin{equation}
a\oplus b=x\land y.\label{eq:CHSH_game_XOR}
\end{equation}
Here $\oplus$ denotes addition modulo $2$ and $\land$ is the logical
AND operator. Using classical strategies, the maximum probability
of success for the CHSH game is upper bounded by $0.75$. However,
using quantum resources such as entangled states, the players can
win the game with probability $\cos^{2}\left(\frac{\pi}{8}\right)$.
The success probability for the CHSH game can be calculated using a
SDP. For details, see Appendix~\ref{sec:xor}. We implemented the  aforementioned
SDP using NSS in Fig.\ref{fig:thetaXOR}b using the same ansatz as for the Lovász Theta number and find that for sufficient number of basis states $M$ we achieve the correct result.

\section{Conclusion}
We presented the NSS for solving SDPs, including rank-constrained SDPs. Our algorithm runs on NISQ devices without the need for classical-quantum feedback loops, requiring only the measurement of overlaps on the quantum computer. 
When the ansatz space is generated by expanding an initial state $\vert \psi \rangle$ with Pauli strings, these overlaps can be measured by sampling the initial state $\vert \psi \rangle$ in a Pauli rotated basis (see Sec.\ref{sec:NSS}).

The ground state problem expressed in terms of density matrices is a SDP and hence a convex optimization program. However, the SDP corresponding to the ground state problem suffers from exponential scaling of the dimension of the quantum state, rendering it difficult to solve on classical computers. To tackle the exponential scaling of the dimension, VQEs employ optimization over smaller dimensional ansatz space. However, the classical optimization corresponding to VQEs is NP-hard, and the landscape contains numerous far from optimal persistent local minima~\cite{bittel2021training,anschuetz2021critical,you2021exponentially}. We employed our NSS to develop the NSE, a NISQ algorithm for the ground state problem.
The key idea of our NSE is to optimize over a smaller number of parameters while preserving the convexity of the original problem. Unlike VQEs, our optimization program is convex, and thus every local minimum is a global minimum. Moreover, the classical optimization program of the NSE is a SDP that can be solved in polynomial-time. \revA{Further, our NSS can efficiently implement constraints on the solution space, which is a well known challenge for VQAs~\cite{kuroiwa2021penalty}.}

The previously proposed quantum assisted eigensolver and iterative
quantum assisted eigensolvers~\cite{bharti2020quantum,bharti2020iterative} are special cases of our rank constrained SDP solver with unit rank (see Appendix~\ref{sec: rank_constrained}).  
With our NSS, it is now possible to solve many important problems that can be formulated in terms of SDPs in a NISQ setting. Our work unlocks the possibility of running one of the most important algorithmic frameworks of classical computing on NISQ computers and exploring the capabilities of the current generation quantum computers. 
\revA{We can use it to find approximation of the ground state and excited states of Hamiltonians, as well as solve symmetry constrained problems. Further, we show how to use NISQ computers to find POVMs that unambiguously discriminate states.}
We also implemented the NSS to calculate Lovász Theta number, a graph invariant with various applications, including in quantum contextuality. Further, we used our algorithm to determine the maximum winning probability of Bell nonlocal games. We demonstrated the applicability of our algorithm for finding the largest eigenvalue of $2^{1000}$ dimensional matrices. \revA{In a recent work~\cite{yu2022quantum}, it was shown that a large family of rank-
constrained SDPs can be written as a convex optimization
over separable two-party quantum states. To solve the aforementioned convex optimization problems, \cite{yu2022quantum} provides a complete hierarchy of SDPs. Based on this result and the techniques in our paper, one can solve the rank-constrained NISQ SDP with a hierarchy of NISQ SDPs.}

Our work leads to many novel avenues for future research. Investigating a systematic problem aware strategy
for determining the initial state $\vert\psi\rangle$
and the set of basis states $\mathbb{S}$ used to construct the
hybrid density matrix will help improve as well as understand the
NSS and its rank constrained variants. It would be fascinating to study our algorithms in the presence of noise~\cite{epperly2021theory}. Further, analysing our algorithms to render complexity-theoretic statements is another exciting direction for further investigation. 

The ground state problem can be thought of as quantum native problem in the sense that the exponential scaling of the Hilbert space size makes it challenging to solve with classical devices. This problem can be framed as a convex optimization program over density matrices. 
Thus, we could conceive the ground state problem as a ``convex quantum native problem". In future, it would be interesting to employ techniques from our work to other convex quantum native problems from disciplines such as quantum chemistry, condensed matter physics and quantum information.
\revA{As we numerically find a clear scaling law in number of qubits for the NSE, we believe rigorous guarantees for the performance of our algorithm for large system sizes can be proven. }

The Quantum interior-point method~\cite{kerenidis2020quantum} and quantum multiplicative weight approaches~\cite{brandao2017quantum,van2017quantum,van2018improvements} have been proposed in the literature to solve SDPs. It would be interesting to develop the corresponding NISQ algorithms. Extending our work to the general case of cone programming seems another exciting direction.

Python code for the numerical calculations performed are available at~\cite{haug2021nisqsdp}.

\medskip
{\noindent {\em Acknowledgements---}} We thank Atul Singh Arora for interesting discussions. We are grateful to the
National Research Foundation and the Ministry of Education, Singapore
for financial support. 
\bibliographystyle{apsrev4-1}
\bibliography{SDP}

\appendix

\section{Hardware Efficient Circuit} \label{sec:hardware_efficient}
In Fig.\ref{fig:AnsatzCircuit}, we show the randomized quantum circuit that we use as one of our ansatz states for our demonstration examples for the NSS. 
\begin{figure}[htbp]
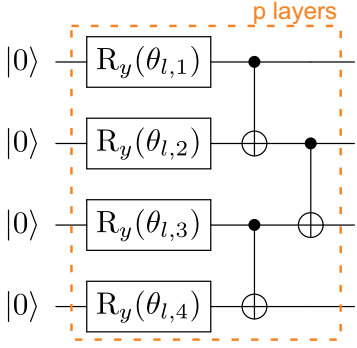

	\centering
	\subfigimg[width=0.26\textwidth]{}{YCNOT.pdf}
	\caption{Circuit of $N$ qubits to generate ansatz state $\ket{\psi}$. It consists of $p$ layers of $N$ single qubit rotations around the $y$ axis with randomized parameters $\theta_{l,i}$, followed by CNOT gates arranged in a nearest-neighbor chain topology. }
	\label{fig:AnsatzCircuit}
\end{figure}

\section{Proof for Positive Semidefinte Hybrid Density Matrix} \label{sec:proof_density}
Here, we show that the hybrid density matrix $X_\beta$ is positive semidefinite when the coefficient matrix $\beta$ is positive semidefinite.
\begin{claim}
$X_\beta\succcurlyeq0$ if and only if $\beta\succcurlyeq0.$
\end{claim}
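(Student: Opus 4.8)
The plan is to recognize $X_\beta$ as a congruence transform of $\beta$. Collect the ansatz states into the $n\times M$ matrix $V=(\ket{\psi_1},\dots,\ket{\psi_M})$ whose $j$-th column is $\ket{\psi_j}$; a direct expansion of the double sum in $X_\beta=\sum_{i,j}\beta_{i,j}\ket{\psi_i}\bra{\psi_j}$ then gives the identity $X_\beta=V\beta V^\dagger$. Both implications of the claim will follow from elementary linear algebra applied to this single identity, with $\beta$ taken to be Hermitian throughout (as is implicit in writing $\beta\in\mathcal{H}^M_+$).

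For the ``if'' direction, assume $\beta\succcurlyeq0$ and factor $\beta=BB^\dagger$ (for instance via its positive square root, or a Cholesky factor). Then $X_\beta=(VB)(VB)^\dagger\succcurlyeq0$. Equivalently, and even more directly, for any $\ket{x}\in\mathcal{H}$ one has $\bra{x}X_\beta\ket{x}=\bra{x}V\beta V^\dagger\ket{x}=\bra{w}\beta\ket{w}\ge0$, where $\ket{w}:=V^\dagger\ket{x}\in\mathbb{C}^M$. This direction requires no assumption on the ansatz states.

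For the ``only if'' direction, assume $X_\beta\succcurlyeq0$ and take an arbitrary $\ket{w}\in\mathbb{C}^M$; the goal is to show $\bra{w}\beta\ket{w}\ge0$. Here I would invoke that the ansatz states $\ket{\psi_j}$ are linearly independent, so $V$ has full column rank $M$ and the adjoint map $V^\dagger:\mathcal{H}\to\mathbb{C}^M$ is therefore surjective. Choose $\ket{x}\in\mathcal{H}$ with $V^\dagger\ket{x}=\ket{w}$; then $\bra{w}\beta\ket{w}=\bra{x}V\beta V^\dagger\ket{x}=\bra{x}X_\beta\ket{x}\ge0$, and since $\ket{w}$ was arbitrary we conclude $\beta\succcurlyeq0$.

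The one delicate point — and the only place the equivalence could fail — is the surjectivity step: if the $\ket{\psi_j}$ were linearly dependent then $V^\dagger$ would not be onto, and one could perturb $\beta$ by any Hermitian operator supported on $\ker V$ without altering $X_\beta$, which would break the reverse implication. Thus the substantive content of the claim is precisely that congruence by a full-column-rank $V$ reflects positivity as well as preserving it; the rest is bookkeeping with $X_\beta=V\beta V^\dagger$. I would therefore state the linear-independence of $\mathbb{S}$ explicitly as a hypothesis (as is done elsewhere in the paper when the ansatz SDP is said to recover the full SDP), since without it only the ``if'' half survives.
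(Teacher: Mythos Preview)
Your proof is correct and takes essentially the same approach as the paper: both reduce to the congruence identity $X_\beta=V\beta V^\dagger$ and test positivity via quadratic forms $\langle x\vert X_\beta\vert x\rangle=(V^\dagger\vert x\rangle)^\dagger\,\beta\,(V^\dagger\vert x\rangle)$, which is exactly the paper's vector $c^x$. You are more careful than the paper in explicitly flagging the linear-independence hypothesis needed for the ``only if'' direction --- the paper's final ``Thus'' step tacitly assumes the vectors $c^x$ range over all of $\mathbb{C}^M$, which is precisely the surjectivity of $V^\dagger$ you isolate.
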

\begin{proof}
We have
\[
X_\beta\succcurlyeq0\iff\sum\beta_{i,j}\langle x\vert\psi_{i}\rangle\langle\psi_{j}\vert x\rangle\geq0\quad\forall\quad\vert x\rangle\in\mathcal{H}.
\]
For a vector $c^{x}$ of size M, defined
via $c_{i}^{x}\equiv\langle x\vert\psi_{i}\rangle$, we have 
\[
X_\beta\succcurlyeq0\iff\left(c^{x}\right)^{\dagger}\beta c^{x}\geq0\quad\forall\quad\vert x\rangle\in\mathcal{H}.
\]
Thus,
\[
X_\beta\succcurlyeq0\iff\beta\succcurlyeq0\,.
\]
\end{proof}

\section{Rank Constrained SDPs} \label{sec: rank_constrained}
We now study rank constrained SDPs. Note that solving rank-constrained
SDPs is NP-hard. The famous max cut problem with $n\times n$ weight matrix $W$ admits
the following rank one constrained SDP.
\begin{gather*}
 \max\text{ }\frac{1}{2}\text{Tr}\left(WX\right)\numberthis \label{eq:rank_1_max_cut} \\
\text{s.t. }X_{i,i}=1\quad\forall i\in[n],\\
X\succcurlyeq0,\\
rank(X) <=1.
\end{gather*}

\subsection{The Rank 1 Case}
We have the following ansatz
\[
x_\alpha=\sum_{i}\alpha_{i}\vert\psi_{i}\rangle
\]
\[
X_\alpha=x_\alpha x_\alpha^{\dagger}
\]
\[
X_\alpha=\sum_{i,j}\alpha_{j}^{\dagger}\alpha_{i}\vert\psi_{i}\rangle\langle\psi_{j}\vert
\]
By construction,
\[
X_\alpha\succcurlyeq0
\]
This corresponds to the positive semidefinite cone constraint and
holds for all values of $\alpha$. Also notice that
\[
X_\alpha^{\dagger}=X_\alpha
\]
by construction and thus $X_\alpha$ is Hermitian. Let us assume that
\[
C=\sum_{k}\beta_{k}U_{k}
\]
and
\[
A_{i}=\sum_{l}f_{i,l}U_{l}^{(i)}.
\]
$Tr\left(CX_\alpha\right)$ translates to
\[
\alpha^{\dagger}\mathcal{D}\alpha,
\]
where
\[
\mathcal{D}_{a,b}=\sum_{k}s_{k}\langle\psi_{b}\vert U_{k}\vert\psi_{a}\rangle.
\]
The constraints $Tr\left(A_{i}X_\alpha\right)= b_{i}$ translate to
\[
\alpha^{\dagger}\mathcal{E}^{(i)}\alpha = b_{i}
\]
where
\[
\mathcal{E}_{a,b}^{(i)}=\sum_{l}f_{i,l}\langle\psi_{b}\vert U_{l}^{(i)}\vert\psi_{a}\rangle.
\]
Thus, in the ansatz space, the standard form rank-constrained SDP reduces to
\begin{equation}
\min\text{ }\alpha^{\dagger}\mathcal{D}\alpha\label{eq:rank_1_SDP}
\end{equation}
\[
\text{s.t. }\alpha^{\dagger}\mathcal{E}^{(i)}\alpha = b_{i}
\]
$\forall i\in\left[m\right].$ This is a quadratically constrained
quadratic program (QCQP). The  quantum assisted eigensolver and iterative
quantum assisted eigensolver~\cite{bharti2020quantum,bharti2020iterative} can be framed as QCQP of the form of program ~\ref{eq:rank_1_SDP}.

\subsection{The Rank k Case}

We have the following ansatz.
\[
x_\alpha^{p}=\sum_{i}\alpha_{i}^{p}\vert\psi_{i}\rangle
\]
\[
X_{\alpha,\gamma} =\sum_{p=1}^{k}\gamma^{p}x_\alpha^{p\dagger}x_\alpha^{p}
\]
\[
X_{\alpha,\gamma}=\sum_{i,j,p}\gamma^{p}\alpha_{j}^{p\dagger}\alpha_{i}^{p}\vert\psi_{i}^{p}\rangle\langle\psi_{j}^{p}\vert
\]
By construction,
\[
X_{\alpha,\gamma}\succcurlyeq0
\]
for $\gamma^{p}\geq0 \quad \forall p \in [k].$ This corresponds to the positive semidefinite
cone constraint and holds for all values of $\alpha^{p}.$ Also notice
that
\[
X_{\alpha,\gamma}^{\dagger}=X_{\alpha,\gamma}
\]
by construction and thus $X_{\alpha,\gamma}$ is Hermitian. Let us assume that
\[
C=\sum_{k}s_{k}U_{k}
\]
and
\[
A_{i}=\sum_{l}f_{i,l}U_{l}^{(i)}.
\]
$Tr\left(C X_{\alpha,\gamma}\right)$ translates to
\[
\sum_{p=1}^{k}\gamma^{p}\alpha^{p\dagger}\mathcal{D}^{p}\alpha^{p},
\]
where
\[
\mathcal{D}_{a,b}^{p}=\sum_{k}s_{k}\langle\psi_{b}^{p}\vert U_{k}\vert\psi_{a}^{p}\rangle.
\]
The constraints $Tr\left(A_{i}X_{\alpha,\gamma}\right) = b_{i}$ translate to
\[
\sum_{p=1}^{k}\gamma^{p}\alpha^{p\dagger}\mathcal{E}^{p(i)}\alpha^{p} = b_{i}
\]
where
\[
\mathcal{E}_{a,b}^{p(i)}=\sum_{l}f_{i,l}\langle\psi_{b}^{p}\vert U_{l}^{(i)}\vert\psi_{a}^{p}\rangle.
\]
Thus, in the ansatz space, the standard form rank-constrained  SDP for general $k$ reduces to
\begin{gather*}
  \min\text{ }\sum_{p=1}^{k}\gamma^{p}\left(\alpha^{p\dagger}\mathcal{D}^{p}\alpha^{p}\right) \numberthis \label{eq:rank_k_SDP} \\
\text{s.t.}\sum_{p=1}^{k}\gamma^{p}\left(\alpha^{p\dagger}\mathcal{E}^{p(i)}\alpha^{p}\right) = b_{i}\quad\forall i\in\left[m\right] \\
\gamma^{p}\geq0 \quad \forall p \in [k].
\end{gather*}
\section{XOR Games SDP Formulation}\label{sec:xor}
\begin{defn}
Two prover game: Given a predicate $V:\mathcal{X}\times\mathcal{Y\times\mathcal{A}}\times\mathcal{B}\rightarrow\{0,1\}$
and a probability distribution $\pi$ on $\mathcal{X}\times\mathcal{Y}$,
a two prover game $\mathcal{G}=\left(\mathcal{X}, \mathcal{Y}, \mathcal{A}, \mathcal{B}, V,\pi\right)$ involves two provers
and one verifier, which proceeds as follows:
\begin{enumerate}
\item The verifier samples a pair of questions $\left(x,y\right)\in\mathcal{X}\times\mathcal{Y}$
according to the probability distribution $\pi$.
\item The verifier sends $x$ and $y$ to the two provers and receives answers
$a\in\mathcal{A}$ and $b\in\mathcal{B}$ respectively.
\item The verifier applies the predicate $V:\mathcal{X}\times\mathcal{Y\times\mathcal{A}}\times\mathcal{B}\rightarrow\{0,1\}$
and accepts the answers if the outcome is $1$, rejects otherwise.
\end{enumerate}
\end{defn}
The size of the sets $\mathcal{A}$ and $\mathcal{B}$, say some integer
value $k$ is assumed to be equal and is referred to as \textit{alphabet
size} of the two prover game.
\begin{defn}
Unique two prover game: A two prover game $\mathcal{G}=\left(\mathcal{X}, \mathcal{Y}, \mathcal{A}, \mathcal{B}, V,\pi\right)$
where the predicate $V:\mathcal{X}\times\mathcal{Y\times\mathcal{A}}\times\mathcal{B}\rightarrow\{0,1\}$
returns value $1$ iff $b=\pi_{x,y}\left(a\right)$, $0$ otherwise
for $\left(x,y\right)\in\mathcal{X}\times\mathcal{Y}$ and outputs
$a,b\in\mathcal{A}\times\mathcal{B}.$ Here $\pi_{x,y}$ is a permutation
of $[k].$
\end{defn}
\begin{defn}
XOR game: A unique two prover game with alphabet size $2$ is known
as XOR game. XOR games are restricted form of two prover nonlocal game where $\mathcal{A}=\mathcal{B}=\left\{ 0,1\right\} $and
the predicate $V$ takes the form
\begin{equation}
V(a,b,x,y)=\begin{cases}
1 & \text{if }a\oplus b=f(x,y)\\
0 & \text{if }a\oplus b\neq f(x,y)
\end{cases}\label{eq:XOR}
\end{equation}
for some given function $f:\mathcal{X\times}\mathcal{Y}\rightarrow\left\{ 0,1\right\} .$
The function $f$ determines whether the two parties should agree
or disgree for each question pair $(x,y)$. 
\end{defn}

The maximum probability of success that the two provers can achieve
is known as value of the game and often denoted by val$\left(\mathcal{G}\right).$ For a given XOR game $\mathcal{G}$ and any strategy, the bias of that strategy is the probability it wins minus probability it loses. The bias of a XOR game  $\mathcal{G}$ is the supremum bias over all possible strategies. Let us denote the supremum bias as $\epsilon(\mathcal{G})$. It is easy to see that
\begin{equation}
    \text{val}\left(\mathcal{G}\right) = 0.5 + 0.5*\epsilon(\mathcal{G})
\end{equation}
For every XOR game $\mathcal{G}$, we further define a matrix $D$ as
\begin{equation}
    D(x,y) = \pi (x,y) (-1)^{f(x,y},
\end{equation}
where $f$ determines the value of the predicate $V$ according to \ref{eq:XOR}
Using $D$, one can further define a symmetric matrix $H$ as
\begin{equation}
H=\frac{1}{2}\begin{pmatrix}0 & D\\
D^{T} & 0
\end{pmatrix}.\label{eq:H_matrix}
\end{equation}
The bias of an XOR game is formulated via $H$ as the following SDP
\begin{gather*}
\max\text{ }Tr\left(HZ\right),\numberthis \label{eq:SDP_XOR}\\
\text{s.t. }Z_{i,i}=1\quad\forall i\in[h],\\
Z\succcurlyeq0,\\
Z\in\mathcal{S}_{+}^{n}.
\end{gather*}
Here, $h$ denotes the size of the $H$ matrix. For the CHSH game, we
have $f_{CHSH}(x,y)=x\land y$ where $\land$ is the logical AND operator.
Thus, the $D$ matrix for the CHSH game is given by
\[
D_{CHSH}=\begin{pmatrix}0.25 & 0.25\\
0.25 & -0.25
\end{pmatrix}.
\]

\section{Scaling of NSE for transverse and longitudinal Ising model}\label{sec:scaling}
\revA{Here, we study the non-integrable Ising model with transverse and longitudinal fields combined with a discretized quantum annealing state as function of number of qubits $N$. We vary the parameters of the Hamiltonian and the number of layers of the quantum annealing state. The initial state has an error $\Delta E_\text{QA}=E_\text{QA}-E_\text{g}$, where $E_\text{QA}$ is the energy of thequantum annealing state and $E_\text{g}$ the exact ground state. We plot the improvement $\Delta E_\text{QA}/\Delta E_\text{NSE}$ of NSE in estimating the ground state. We find for all cases that $\Delta E_\text{QA}/\Delta E_\text{NSE}$ collapses to a single curve for sufficient number of qubits when plotted against number of ansatz states divided by number of qubits ${M^*}=\frac{M}{3N}$. The result is shown in Fig.\ref{fig:groundstatescaling}. We find a collapse to a single curve for all cases for sufficiently large $N$. We find that the NSE yields even better results when increasing the field $h$ of the Ising model or layers $p$ of the quantum annealing state. For $h=\frac{1}{2}$, we find that the collapse becomes evident only for larger number of qubits $N$ compared to the other cases, which we suspect is due to quantum annealing providing worse estimates of the ground state energy when $h$ is small.}

\begin{figure*}[htbp]
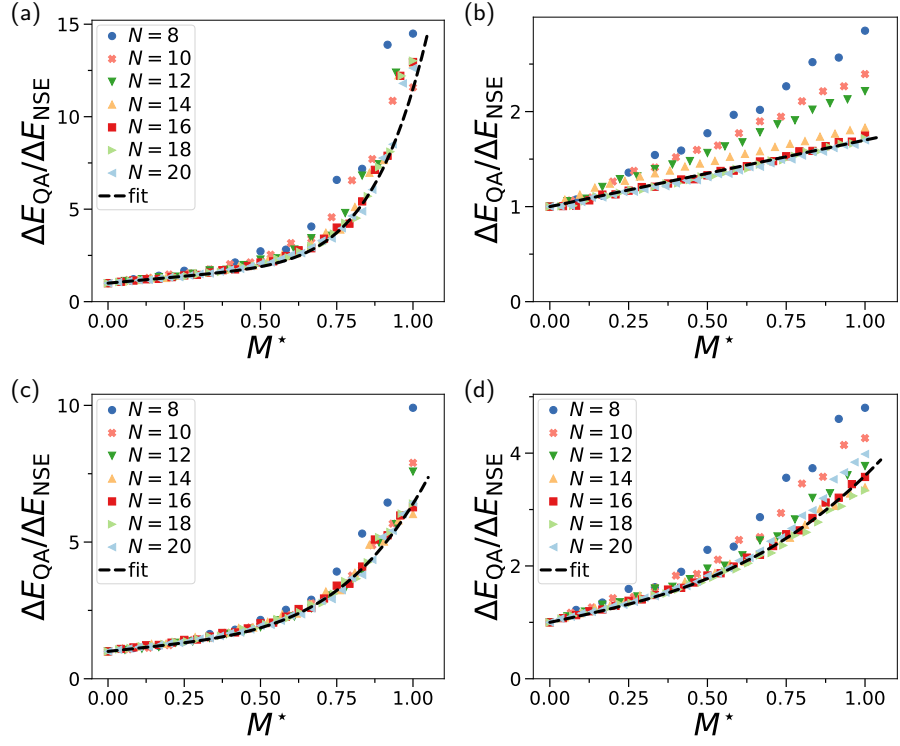

	\centering
	\subfigimg[width=0.33\textwidth]{(a)}{ImprovErrorMEvalQAESDPL8i16m14J1h2d4e0s0G1000R1M0E1c-10_0.pdf}
	\subfigimg[width=0.33\textwidth]{(b)}{ImprovErrorMEvalQAESDPL8i16m14J1h0_5d4e0s0G1000R1M0E1c-10_0.pdf}\\
	\subfigimg[width=0.33\textwidth]{(c)}{ImprovErrorMEvalQAESDPL8i16m14J1h1d8e0s0G1000R1M0E1c-10_0.pdf}
	\subfigimg[width=0.33\textwidth]{(d)}{ImprovErrorMEvalQAESDPL8i16m14J1h1d2e0s0G1000R1M0E1c-10_0.pdf}
	\caption{
	We show the scaling with $N$ of the relative improvement $\Delta E_\text{QA}/\Delta E_\text{NSE}$. Here, $\Delta E_\text{NSE}$ is the error $\Delta E_\text{NSE}=E_\text{NSE}-E_\text{g}$ of the NSE and $\Delta E_\text{QA}=E_\text{QA}-E_\text{g}$ the error of the quantum annealing state. The relative improvement collapses to a single curve for varying number of ansatz states ${M^*}=M/(3N)$ divided by number of qubits. We vary transverse field $h$ and  number $p=cN$ of layers of quantum annealing. \idg{a} $h=2$ and $p=N/2$ with fit $\Delta E_\text{QA}/\Delta E_\text{NSE}=9{M^*}^6+1.5{M^*}+1$. \idg{b}  $h=\frac{1}{2}$  and $p=N/2$   with fit $E_\text{QA}/\Delta E_\text{NSE}=0.7{M^*}+1$.
	\idg{c} $h=1$ and $p=N$ with fit $\Delta E_\text{QA}/\Delta E_\text{NSE}=4.2{M^*}^4+1.2{M^*}+1$.
	\idg{d} $h=1$ and $p=N/4$ with fit $\Delta E_\text{QA}/\Delta E_\text{NSE}=1.4{M^*}^3+1.2{M^*}+1$. We fix the field $g=1$ for all curves. }
	\label{fig:groundstatescaling}
\end{figure*}

\end{document}